\theoremstyle{plain}
\newtheorem{thm}{Theorem}
\newtheorem{lem}[thm]{Lemma}
\newtheorem{cor}[thm]{Corollary}
\newtheorem{remark}[thm]{Remark}
\newtheorem{alg}[thm]{Algorithm}
\theoremstyle{definition}
\newtheorem{definition}[thm]{Definition}
\newtheorem{exl}[thm]{Example}
\numberwithin{thm}{section}
\def\N{{\mathbb N}}
\def\R{{\mathbb R}}
\title{Algorithms for Dynamic Computational Geometry with Applications}
\author{Laurence Boxer
\thanks{
    Department of Computer and Information Sciences,
    Niagara University,
    Niagara University, NY 14109, USA; \newline
    and \newline
    Department of Computer Science and Engineering,
    State University of New York at Buffalo. \newline
    email: boxer@niagara.edu \newline
    ORCID: 0000-0001-7905-9643
}
}
\date{ }
\begin{document}
\maketitle{}
\begin{abstract}
Most of the literature of computational geometry concerns geometric
properties of sets of static points. M.J. Atallah introduced ``dynamic
computational geometry," concerned with both momentary and
long-term geometric properties of sets of moving point-objects. This
area of research seems to have been dormant recently. The current
paper examines new problems in dynamic computational geometry: the
``Too Close," ``Too Far," and ``3 Aligned" problems. 
Worst-case optimal solutions are given for these problems using the
sequential and coarse-grained multicomputer (CGM)
models of computation.

Key words and phrases: dynamic computational geometry, 
piecewise defined continuous function, analysis of algorithms,
coarse-grained multicomputer

\end{abstract}

\maketitle

\maketitle

\section{Introduction}
Computational geometry is generally concerned with geometric
properties of some object that is usually a set~$S$ of static points;
less often, a set of line segments, curves, or surfaces.
Examples of such properties include finding
\begin{itemize}
    \item the nearest pair of distinct members of $S$;
    \item the diameter of $S$;
    \item the convex hull of $S$;
    \item maximal subsets of (exactly or approximately) 
    collinear members of $S$
    \item subsets of $S$ that are (exactly or approximately) congruent
          to, or similar to, a given set of points,
\end{itemize}
et al. Readers needing more background in computational geometry may
find useful such sources as~\cite{PrepAndShamos,BT, deBergEtal}.

M.J. Atallah's paper~\cite{Atallah} introduced ``dynamic computational
geometry," presenting efficient sequential algorithms for a variety
of geometric problems concerned with systems of point--objects that
are in simultaneous motion in Euclidean space $\R^d$; 
it was assumed for each object that its motion was 
described in each coordinate by known polynomials of time.
Parallel versions of these algorithms were studied 
in~\cite{BM89a,BM89b,BMR98,BMR99}. Another paper
concerned with geometric properties of moving objects
is~\cite{TraEtal}. The term ``dynamic" is also applied to
the study of a system of objects in which, over time, objects are
inserted or deleted from the system, but this is not what is
studied in the current paper.

In the current paper, we introduce efficient sequential and coarse
grained parallel algorithms for additional problems in dynamic 
computational geometry (as introduced in~\cite{Atallah}), and 
discuss applications. Problems considered are:
\begin{itemize}
    \item The ``Too Close" problem: When, if ever, is the earliest
          time that one of our point objects,~$q_1$, is too close
          to any other of our point objects,~$q_j$? For moving objects,
          this seems to have safety implications.
    \item The ``Too Far" problem: When, if ever, is the earliest
          time that one of our point objects,~$q_1$, is too far from
          any other of our point objects,~$q_j$?
          This seems to have implications for the ability to
          communicate or the ability to attack.
    \item The ``3 Aligned" problem: When, if ever, is the earliest
          time that one of our point objects,~$q_1$, is (approximately
          or exactly) collinear with any two others,~$q_i$ and~$q_j$.
          This seems to have implications for 
          the ability to communicate or the ability to attack.
\end{itemize}

\section{Preliminaries}
\subsection{Motion assumptions}
\label{motionAssumps}
In the current paper, we consider a system 
$Q = \{q_i\}_{i=1}^n$ of moving point-objects in~$\R^d$. 
The motion of the point-object $q_i$ is 
described by a function $f_i: [0,M] \to \R^d$
 such that if $p_j: \R^d \to \R$ is the 
 projection to the~$j^{th}$ coordinate,
 \[ p_j(x_1, \ldots, x_n) = x_j,
 \]
 then each function  $p_j \circ f_i(t)$
 is a polynomial in~$t$.
We further assume, as
in~\cite{Atallah}, that
polynomial equations of bounded degree can be
solved in $\Theta(1)$ time. One
can imagine applications, e.g., in the safety of
airplanes circling an airport or satellites circling
the globe.

\subsection{On more general motion}
The assumption of motion $f: [0,M] \to \R^d$ such that
for each coordinate~$j$, $p_j \circ f: [0,M] \to \R$ is a polynomial in the
time variable~$t$, may seem rather artificial. However, polynomials are
dense in the set of continuous functions from~$[0,M]$ to~$\R$, i.e.,
given a continuous function $g: [0,M] \to \R$ and $\varepsilon > 0$,
there exists a polynomial $G: [0,M] \to \R$ such that for all
$t \in [0,M]$, $|g(t) - G(t)| < \varepsilon$. Further, depending on the
function~$g$, it is often possible to compute a corresponding function~$G$ efficiently, e.g.,
by using Taylor series, polynomial interpolation, et al.

Consider, for example, elliptic motion at constant angular velocity in the Euclidean plane.
We describe a point in~$\R^2$ as a pair~$(x,y)$ of
real numbers. We consider motion described by
\begin{equation}
\label{circMotion}
\begin{array}{cc}
    x = R_1 \cos(at + \theta_0) + x_0  \\
    y = R_2 \sin(at + \theta_0) + y_0
\end{array}
\end{equation}
for appropriate constants 
$R_1,R_2,a,\theta_0,x_0,y_0$, where~$t$ is
the time variable. Note that both coordinates have the
same period, and
the $x$ and $y$ coordinates describe a planar ellipse
(a circle if $R_1=R_2$), with
constant angular velocity. 

Even if a library of functions containing the $\sin$ or $\cos$ is not available,
the $\sin$ and $\cos$ functions can each be
approximated by the sums of the first several 
terms of their respective appropriately centered Taylor series.
Thus, motion as described
in~(\ref{circMotion}) is included in the set of
motions described in section~\ref{motionAssumps}. Similarly,
a wide range of motion functions not described in the form of
section~\ref{motionAssumps} can be represented in this form with
satisfactory approximation.

\subsection{Piecewise defined functions}
\begin{definition}
\label{pieceDef}
{\rm \cite{Atallah}}
Let $s \in \N$. Let $h: [0,M] \to \R$.
Let $f_1,f_2,\ldots,f_s: [0,M] \to \R$ be distinct continuous functions.
Suppose for all $t \in [0,M]$ there exists
an index~$k$ and an interval~$J=[u,v] \subset [0,M]$ such that
$t \in J$ and $h|_J = f_k|_J$.
A {piece of}~$h: [0,M] \to \R$
{\em based on} $f_1,f_2,\ldots,f_s$
is a pair~$(f_i, [u,v])$ such that
$[u,v]$ is a maximal subinterval of $[0,M]$ for which 
$h|_{[u,v]}=f_i|_{[u,v]}$ identically.
\end{definition}

\begin{definition}
    \label{switchPieceDef}
   Let $s \in \N$.
Let $f_1,f_2,\ldots,f_s: [0,M] \to \R$ be distinct continuous functions. 
    Let $h: [0,M] \to \R$ be a function defined by pieces 
    based on $f_1,f_2,\ldots,f_s: [0,M] \to \R$.
    We say $h$ {\em switches pieces at} $t_0 \in [0,M]$ if
    there are pieces of~$h$, $(f_i, [u,t_0])$ and
    $(f_j, [t_0,v])$ such that $i \neq j$, i.e., $t_0$ is a
    common endpoint of the intervals of two pieces of~$h$ that
    have distinct functions. Such a point $t_0$ is a
    {\em switchpoint} of $\{f_i\}_{i=1}^s$ on $[0,M]$.
\end{definition}

We say the set of pieces of the function~$h$ is its 
{\em description from} or {\em based on} $\{f_1, \ldots, f_s\}$, and
this set of pieces {\em describes}~$h$

\begin{remark}
    \label{switchInterior} Note that by Definition~\ref{switchPieceDef},
    a switchpoint must be an interior point of~$[0,M]$.
\end{remark}

\begin{lem}
\label{switchPtsAndPieces} A continuous function
$f: [0,M] \to \R$ has $s$ switchpoints based on $\{f_i\}_{i=1}^s$
if and only if $f$ has $s+1$ pieces based on $\{f_i\}_{i=1}^s$ on~$[0,M]$.
\end{lem}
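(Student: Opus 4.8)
The plan is to extract from the definitions the single structural fact that the pieces of~$f$ form a \emph{finite abutting chain} along $[0,M]$ whose consecutive members use distinct functions; the claimed biconditional then drops out by counting. Precisely, I would show that the description of~$f$ based on $\{f_i\}_{i=1}^s$ can be listed as
\[ (f_{j_1},[a_0,a_1]),\ (f_{j_2},[a_1,a_2]),\ \ldots,\ (f_{j_m},[a_{m-1},a_m]),\qquad 0=a_0<a_1<\cdots<a_m=M, \]
with $f_{j_\ell}\neq f_{j_{\ell+1}}$ for every~$\ell$. Granting this, $f$ has $m$ pieces and, as explained below, exactly $m-1$ switchpoints, so $f$ has $k$ switchpoints if and only if it has $k+1$ pieces; specializing to $k=s$ gives the lemma.

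To obtain the chain I would argue in three steps. (i)~The intervals of the pieces cover $[0,M]$: by Definition~\ref{pieceDef}, every $t$ lies in some interval $J$ on which $f$ agrees with one of the $f_i$, and the maximal agreement interval containing $J$ is a piece containing~$t$. (ii)~Distinct pieces meet in at most one point: two pieces carrying the same~$f_i$ can neither overlap nor share an endpoint, since by continuity their union would be a strictly larger interval on which $f=f_i$, contradicting maximality; and two pieces carrying $f_i\neq f_j$ cannot overlap in an interval, for there $f_i\equiv f_j$, which is impossible for distinct functions under the polynomial motion assumptions of Section~\ref{motionAssumps} (or, more generally, whenever no two of the $f_i$ agree on any nondegenerate subinterval of $[0,M]$). (iii)~The description is finite: a switchpoint $t_0$ satisfies $f_i(t_0)=f_j(t_0)$ for the functions of its two abutting pieces, hence lies in the finite set of pairwise coincidence points of $f_1,\dots,f_s$ furnished by the motion assumptions; so there are finitely many switchpoints, and on each of the finitely many closed subintervals into which they partition $[0,M]$ the function $f$ cannot switch and therefore equals a single~$f_i$, leaving finitely many pieces. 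Finitely many intervals that pairwise meet in at most a point and cover the connected set $[0,M]$ must line up as the chain above, and a merge argument as in~(ii) forces consecutive functions to differ.

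With the chain in hand the switchpoints are read off immediately: each $a_\ell$ with $1\le\ell\le m-1$ is the common endpoint of the abutting pieces $(f_{j_\ell},[a_{\ell-1},a_\ell])$ and $(f_{j_{\ell+1}},[a_\ell,a_{\ell+1}])$, whose functions differ, hence is a switchpoint; conversely a switchpoint is interior to $[0,M]$ by Remark~\ref{switchInterior} and, as the shared endpoint of two abutting pieces with distinct functions, must be one of $a_1,\dots,a_{m-1}$. Thus there are $m-1$ switchpoints against $m$ pieces, completing the count.

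The step I expect to be the real obstacle is the structural claim of the second paragraph: establishing that the description is finite and that pieces belonging to different functions do not overlap in an interval. This is exactly where the hypotheses do their work — continuity of the $f_i$ for the merging arguments and for passing between open and closed agreement intervals, the fact that a switchpoint forces two of the $f_i$ to coincide (together with the motion assumptions of Section~\ref{motionAssumps}) for finiteness, and the non-coincidence of distinct $f_i$ on any nondegenerate subinterval for non-overlap. Once the abutting-chain normal form is in place, everything else is bookkeeping; alternatively one can package the same merge/split observations as an induction on the number of switchpoints, with base case $s=0$, in which $f$ agrees with a single $f_i$ throughout $[0,M]$.
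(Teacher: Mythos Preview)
Your argument is correct and follows the natural counting route the paper presumably intends; the paper's own proof is the one-liner ``This follows easily from Remark~\ref{switchInterior}. Details are left to the reader.'' You have in fact been more careful than the paper, explicitly establishing the abutting-chain structure and addressing the finiteness and non-overlap subtleties (via the polynomial motion assumptions of Section~\ref{motionAssumps}) that the paper leaves implicit.
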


\begin{proof}
    This follows easily from Remark~\ref{switchInterior}. Details
    are left to the reader.
\end{proof}

\subsection{Coarse grained multicomputer}
Material in this section is largely quoted or paraphrased from~\cite{BM10-11}.

The {\em coarse grained multicomputer} (CGM) was described
in~\cite{DFR93} as a model of parallel computation capable of processing
a great deal of data with relatively few processors. A $CGM(n,p)$ has~$p$
processors operating on~$\Theta(n)$ data. Every processor has~$\Omega(n/p)$
memory cells, each of~$\Theta(\log n)$ bits. ``Coarse grained" means the 
size~$\Omega(n/p)$ of local memory is ``considerable larger" than~$\Theta(1)$;
customarily, this is taken to mean $n/p \ge p$, i.e., $n \ge p^2$. We use this
convention%, or an equivalent when we consider an expression other than~$n$ to measure
%the data being processed,
in the current paper. Processors may share memory or may be
arranged in some interconnection network.

Processors are indexed from 1 to $p$, and each processor ``knows" its index.

We regard a $CGM(n,p)$ as a connected graph~$G$ in which vertices are processors. If the 
processors are in an interconnection network then edges are
communication links between directly connected processors. If the processors share memory 
then we regard~$G$ as a complete graph, i.e., every pair of processors is regarded
as joined by a communication link.

Processors were assumed in~\cite{DFR93} to communicate data among themselves through
sorting operations, but we will use the assumption of later papers
that any pair of adjacent processors may exchange a unit of data 
in~$\Theta(1)$ time. Similarly, in a shared memory system, we assume any pair of
processors may exchange a unit of data in~$\Theta(1)$ time.

We will use the following.

\begin{thm}
    \label{CGMbroadcast}
    {\rm \cite{BM04}}
    A unit of data can be broadcast in a $CGM(n,p)$ from its source processor to
    all processors in~$O(p)$ time.
\end{thm}

A set of values~$S$ distributed among the processors of a parallel computer~$G$ is
said to be {\em gathered} to a processor~$P$ of~$G$ when all the values
of~$S$ are copied to~$P$.

\begin{thm}
    \label{gather-scatter}
    {\rm \cite{BM04}}
    Let $S$ be a set of $N$ elementary data items distributed among the
    processors of a $CGM(n,p)$~$G$ such that $N=\Omega(p)$ and $N=O(n/p)$.
%    \begin{itemize}
%        \item 
        $S$ can be gathered to any processor of~$G$ in optimal~$\Theta(N)$ time.
%        \item If $S$ has been gathered to one processor, then $S$ can be scattered
%              in optimal~$\Theta(N)$ time.
%    \end{itemize}
\end{thm}

\begin{thm}
    \label{CGMprefix}
    {\rm \cite{BM04}}
    Let $X=\{x_i \}_{i=1}^n$ be a set of elementary comparable data items 
    distributed $n/p$ per processor in a $CGM(n,p)$. A parallel
    prefix operation on~$X$ can be computed 
    in optimal~$\Theta(n/p)$ time.
\end{thm}

\begin{comment}
\begin{definition}
\label{lowEnvDef}
    Let the functions $f_i$ be as in Definition~\ref{pieceDef}.
    The function $h: [0,M] \to \R$ given by
\[ h(t) = \min \{ f_i(t) \mid 1 \le i \le s \}
\]
is called the {\em lower envelope of}~$\{f_i\}_{i=1}^n$.
\end{definition}

\begin{thm}
    {\rm (Lemma 2.4 of~\cite{Atallah})} 
    \label{piecesInMin}
    Let $s \in \N$. Let $f_1,f_2,\ldots,f_n: [0,M] \to \R$ 
    be distinct continuous functions
such that for $i\neq j$, the graphs of~$f_i$ and~$f_j$ intersect in
at most~$s$ points. Then the lower envelope of~$\{f_i\}_{i=1}^n$ is described with
at most~$\lambda(n,s)$ pieces, and this upper bound is the best possible.
\end{thm}

\begin{thm}
    {\rm (Lemma 2.5 of~\cite{Atallah})} 
    Let $s \in \N$. Let $f_1,f_2,\ldots,f_n: [0,M] \to \R$ be 
    distinct continuous functions
such that for $i\neq j$, the graphs of~$f_i$ and~$f_j$ intersect in
at most~$s$ points. Assume that 
\begin{itemize}
    \item every $f_i$ has
a $\Theta(1)$ storage description and can be evaluated at 
any~$t$ in~$\Theta(1)$ time, and
\item for every two
distinct functions $f_i$ and $f_j$. the (at most $s$) real solutions to the equation $f_i(t) = f_j(t)$ can be
computed in $\Theta(1)$ time.
\end{itemize}
Then the lower envelope~$h$ has $O(n \log^* n)$ pieces; its description
can be computed in $O(n \log n \log^* n)$ sequential time. If
$s \le 2$ then~$h$ has $O(n)$ pieces; its description can be computed in
$O(n \log n)$ sequential time.
\end{thm}
\end{comment}

\section{``Too close" problem}
Problems studied in~\cite{Atallah,BM89a,BM89b,BMR98} that may be interpreted as
concerned with safety:
\begin{itemize}
    \item The Nearest Neighbor Problem: which pair of objects are
nearest, as a function of time?
    \item The Collision Problem: when (if ever) will some pair of objects collide?
\end{itemize}
 Another important safety problem, which we call
the ``Too Close Problem": when is the earliest, if ever,
the object~$q_1$ is too close to one or more other objects~$q_j$? 

If~$d$ is a metric in which each
point-object~$P_i$ has a distance~$g_i$ such that
it is dangerous for 
\begin{equation}
\label{danger}
d(q_1,q_j) \le \min \{g_1, g_j \mid  j \neq 1 \}
\end{equation}
then we are particularly interested in predicting the time of
the first occurrence of~(\ref{danger}); perhaps the 
trajectories of one or both of the objects may be
changed to avoid such an occurrence.

If for any index $j \neq 1$,
inequality~(\ref{danger}) holds at $t=0$, then
$t=0$ is the earliest time of
a ``too close" occurrence.
Otherwise, at $t=0$, we have
$d(P_1,P_j) > \min\{q_1,q_j\}$
for all $j>1$, and since the
trajectory functions~$f_i$ are
all continuous, any instances
of inequality~(\ref{danger})
occur earliest when
$d(P_1,P_j) = \min\{q_1,q_j\}$.

\subsection{Sequential solution}
\begin{alg}
    \label{tooCloseAlg-seq}
    Let $Q$ be a set of~$n$ point-objects as in
    section~\ref{motionAssumps}. Assume for each~$i$ we know~$f_i$,
    the function that describes the motion of~$q_i$; and
    $g_i$, the minimum safe distance between~$q_i$ and any other~$q_j$.
    We solve the Too Close Problem 
for~$q_1$ as follows, where the return values of the algorithm
are in the variables $minTime$ and $earliestIndices$.
\begin{enumerate}
    \item $earliestIndices := \emptyset$ and
        $minTime := \infty$.
        \item For $j =2$ to $n$
        \begin{enumerate}
        \item Let $d_j = \min\{g_1,g_j\}$
        \item Compute the vector difference function
        \[S_j(t) := f_j(t) - f_1(t).\]
        $/*~${\rm Note $|S_j(t)|$ is the distance between
        $q_j$ and $q_1$ at time~$t$.} $*/$
        \item If $|S_j(0)| \le d_j$ then
        \begin{quote}
               minTime := 0 \\ 
               $earliestIndices := earliestIndices ~\cup~\{j\}$
        \end{quote}
        End If $|S_j(0)| \le d_j$ \\
        Else If equation~~$|S_j(t)| = d_j$~~
         has a smallest solution $t_j \in [0,M]$ then
        \begin{quote}
        If $t_j = minTime$ then 
        \begin{quote}
        $/*$ we have a tie for earliest solution found so far
        $*/$ \\
        $earliestIndices := earliestIndices ~ \cup \{j\}$
        \end{quote}
        Else If $t_j < minTime$ then
        \begin{quote}
        $/*$ we have a new earliest-so-far solution $*/$ \\
                $minTime := t_j$ \\
        $earliestIndices := \{j\}$
         \end{quote}
        End Else If $t_j < minTime$
         \end{quote}
        End Else If smallest solution $t_j$
        \end{enumerate}
        End For 
      \end{enumerate}
\end{alg}

\begin{thm}
\label{tooCloseTHm}
    Algorithm~\ref{tooCloseAlg-seq}
    executes sequentially in
    optimal~$\Theta(n)$ time.
\end{thm}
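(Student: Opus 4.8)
The plan is to establish matching upper and lower bounds on the running time of Algorithm~\ref{tooCloseAlg}.

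For the upper bound, I would trace through the algorithm's structure. Step~1 is clearly $\Theta(1)$. The main work is the single \texttt{For} loop of step~2, which executes exactly $n-1$ times. Within one iteration, I must argue each substep is $\Theta(1)$: substep~(a) is a single comparison; substep~(b) forms the vector difference $S_j(t) = f_j(t) - f_1(t)$, which, under the motion assumptions of section~\ref{motionAssumps}, amounts to subtracting the (bounded-degree) coordinate polynomials of $f_j$ from those of $f_1$, an operation on a constant number of coefficients and hence $\Theta(1)$; substep~(c) evaluates $|S_j(0)|$ (a $\Theta(1)$ polynomial evaluation per coordinate, then a $\Theta(1)$ norm computation), performs one comparison, and in the branch that matters solves the equation $|S_j(t)|^2 = d_j^2$ for $t \in [0,M]$ --- this is a polynomial equation of bounded degree (the degree is bounded because the $f_i$ have bounded-degree coordinate polynomials), which by the standing assumption from section~\ref{motionAssumps} is solvable in $\Theta(1)$ time, after which extracting the smallest root in $[0,M]$ and updating $minTime$ and $earliestIndices$ (the latter update being an assignment or a single-element insertion, never a union of large sets, since ties only append one index at a time) are all $\Theta(1)$. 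Summing $\Theta(1)$ over $n-1$ iterations gives $O(n)$.

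For the lower bound, the argument is the standard one: any correct algorithm for the Too Close Problem for $q_1$ must, in the worst case, inspect the trajectory data of every other object $q_j$, $2 \le j \le n$; if it failed to read the data for some $q_j$, an adversary could alter $f_j$ so that $q_j$ comes too close to $q_1$ at a time earlier than the reported answer, contradicting correctness. Hence $\Omega(n)$ time is required simply to read the input. Combining this with the $O(n)$ upper bound yields the optimal $\Theta(n)$ bound claimed.

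The main obstacle --- really the only point needing care --- is justifying that the equation solved in substep~(c) has bounded degree and therefore falls under the $\Theta(1)$-time polynomial-solving assumption. I would make explicit that $|S_j(t)|^2 = \sum_{k=1}^d \big(p_k \circ f_j(t) - p_k \circ f_1(t)\big)^2$ is a polynomial in $t$ whose degree is at most twice the maximum degree of the coordinate polynomials of the $f_i$, which is a fixed constant of the problem instance; thus the degree does not grow with $n$, and the solving step is legitimately $\Theta(1)$. Everything else is routine bookkeeping, and I would leave the finer details to the reader in the usual way.
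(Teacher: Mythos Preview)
Your proposal is correct and follows essentially the same approach as the paper's proof: verify that step~1 is $\Theta(1)$, that the loop body of step~2 runs $\Theta(n)$ times with $\Theta(1)$ work per iteration, and that $\Omega(n)$ is forced because every $q_j$ must be examined. You simply flesh out what the paper dismisses as ``elementary,'' making explicit the bounded-degree polynomial reasoning behind the $\Theta(1)$ cost of substep~(c) and giving an adversary formulation of the lower bound; these are elaborations rather than a different route.
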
 

\begin{proof}
    Clearly, step~1 executes 
    in~$\Theta(1)$ time.

    Step 2 consists of a loop
    whose body executes~$\Theta(n)$ times.
    It is elementary that each step of the loop body executes in
     $\Theta(1)$ time. Therefore step 2 requires
    $\Theta(n)$ time.

    Thus, the algorithm has a running time of $\Theta(n)$. This is 
    worst-case optimal, since in the worst case, each member of~$Q$ must be considered.
\end{proof}

\begin{remark}
    Note if all we want is the time of the first instance of too-closeness, 
    and not the index set of the too-close objects, we can change the For loop
    to a loop governed by ``While $minTime > 0$ and $j \le n$". In this case we have
    a best-case running time of~$\Theta(1)$, realized in the case $|S_2(0)| \le d_2$.
\end{remark}

\subsection{CGM solution}
We give a CGM solution that is only concerned with finding the earliest instance,
if one exists, of two members of~$Q$ being too close.

\begin{alg}
    \label{tooCloseAlg-CGM}
    Let $Q$ be a set of~$n$ point-objects as in
    section~\ref{motionAssumps}. Assume each of the $p$ processors has descriptions of
    $\Theta(n/p)$ of the pairs~$(f_i, g_i)$. Without loss of generality, 
    processor~$P_j$ has descriptions of
the records in the set~$\left \{ (f_i, g_i) \}_{i=(j-1)n/p +1}^{jn/p} \right \}$.
    We solve the Too Close Problem for~$q_1$ as follows, where the return values
    of the algorithm are in the variables $minTime$ and $earliestIndices$.
\begin{enumerate}
    \item Broadcast the pair $(f_1, g_1)$ from processor $P_1$ to all
          processors.
    \item Use Algorithm~\ref{tooCloseAlg-seq} so that each processor~$P_j$ solves its
        portion of the problem, finding the earliest time~$T_j$ that any of 
        $\{ q_i \}_{i=(j-1)n/p +1}^{jn/p}$ ($i \neq 1)$, if any, 
        is too close to~$q_1$.
    \item Gather $\{ T_j \}_{j=1}^p$ to $P_1$.
    \item Processor $P_1$ computes $m = \min\{T_j\}_{j=1}^p$ in $\Theta(p)$ time
          and notes the set~$Y$ of indices at which the minimum occurs. I.e.,
          $j \in Y$ if and only if $T_j = m$.
    \item If desired, broadcast $m$ from $P_1$ to all processors.
\end{enumerate}
\end{alg}

\begin{thm}
    Algorithm~\ref{tooCloseAlg-CGM} runs in optimal~$\Theta(n/p)$ time.
\end{thm}

\begin{proof}
\begin{enumerate}
    \item By Theorem~\ref{CGMbroadcast}, the broadcast requires~$O(p)$ time.
    \item By Theorem~\ref{tooCloseAlg-seq}, the local partial solutions require
          parallel~$\Theta(n/p)$ time.
    \item  By Theorem~\ref{gather-scatter}, the gather step requires $O(p)$ time.
    \item The minimum computed in $P_1$ requires $\Theta(p)$ sequential time.
    \item By Theorem~\ref{CGMbroadcast}, the broadcast requires~$O(p)$ time.
\end{enumerate}
     Thus the algorithm runs in~$\Theta(n/p)$ time.

The worst-case optimality of this running time follows from the 
worst-case optimality of $\Theta(n)$
as the running time of the sequential solution.
\end{proof}

\begin{remark}
    If we wish to modify Algorithm~\ref{tooCloseAlg-CGM} to return the
    index set of the Too Close objects at the earliest instance of Too-closeness,
    notice that in the worst case, all members of~$Q \setminus \{q_1\}$
    could be too close to~$q_1$ at that time, so the index set would have 
    cardinality~$n-1$, too large for the application of Theorem~\ref{gather-scatter}. We can obtain such an algorithm
    with the same asymptotic running time as Algorithm~\ref{tooCloseAlg-CGM}
    by not gathering the index set. 
\end{remark}

\section{``Too far" problem}
If our point-objects are expected to communicate with each
other, we might have an occurrence of a pair being too
far to communicate; or if~$q_1$ is required to be within a certain distance of
each of the others, we might have an occurrence in which some~$q_j$ gets too far away.
We are particularly interested in predicting the time of
the first occurrence of such an event, since the 
trajectories of one or more of the objects may be
changed to avoid such an occurrence.

This problem can be solved by algorithms similar to
Algorithms~\ref{tooCloseAlg-seq} and~\ref{tooCloseAlg-CGM}.
The sequential execution time is therefore a worst-case optimal~$\Theta(n)$, and
the execution time for a $CGM(n,p)$ is therefore a worst-case optimal~$\Theta(n/p)$.

\section{``3 aligned" problem}
Another problem that has safety implications is the
``3 aligned" problem. 
\begin{itemize}
    \item When 3 of our point-objects
are aligned (or nearly aligned), the middle one 
may interfere with communications between the outer pair.
\item If the two outer points of a (nearly) collinear triple represent 
      hostile combatants, the middle object may prevent the outer combatants from attacking each other.
\end{itemize} 

\begin{figure}
    \centering
    \includegraphics[width=1\linewidth]{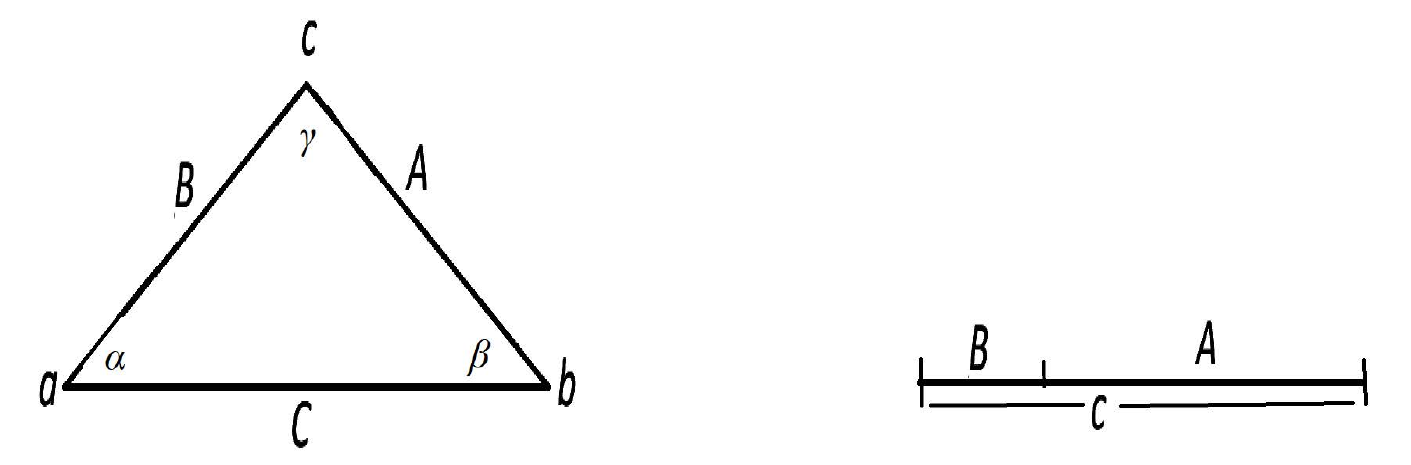}
    \caption{Assume we use the Euclidean or the Manhattan metric. \\
        Left: For a triangle as shown, with non-collinear vertices,~~~
        $C < A + B$. \\
    Right: For collinear points as shown,~~~$C = A + B$.
    }
    \label{fig:3-gon}
\end{figure}

We are particularly interested in the first predictable
occurrence of, say, $q_1$ and 2 other point-objects being (nearly) 
aligned, as perhaps we can call for at least one of them to change its
trajectory before this occurs. We give a worst-case optimal sequential solution
for this problem.

\subsection{Preliminaries}
We use the Manhattan metric as our distance function for this problem:
for $x = (x_1, \ldots, x_d),~y=(y_1, \ldots, y_d) \in \R^d$,
\[ d(x,y) = \sum_{i=1}^d |x_i - y_i|. \]

It will be useful
to determine pieces of the absolute value of a polynomial. If $f: [0,M] \to \R$ is
a polynomial function of degree at most~$s$, then $|f(t)|$ may have pieces on which
the function is~$f(t)$ and other pieces on which the function is~$-f(t)$.
We have the following.

\begin{lem}
    \label{signOfFunction}
     Let $f: [0,M] \to \R$ be continuous such that 
     for all $t \in [0,M]$, $f(t)$ can be computed in~$\Theta(1)$ time. 
      Let $[u,v]$ be a subinterval of~$[0,M]$ such that
     $u < t < v$ implies $f(t) \neq 0$.
     Let $f(t)$ and $-f(t)$ be functions used to determine pieces of $|f(t)|$.    
     Then we can determine in~$\Theta(1)$ sequential time whether
     $|f(t)|=f(t)$ for all $t \in [u,v]$,  or~$|f(t)|=-f(t)$ for all $t \in [u,v]$.
\end{lem}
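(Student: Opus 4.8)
The plan is to exploit the Intermediate Value Theorem together with the hypothesis that $f$ is zero-free on the open interval $(u,v)$, so that a single function evaluation at an interior point decides everything. First I would dispose of the degenerate case $u=v$: the interval is then a single point, we compute $f(u)$ once, and report $|f(u)|=f(u)$ if $f(u)\ge 0$ and $|f(u)|=-f(u)$ otherwise; this is clearly $\Theta(1)$.

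For $u<v$, the key observation is that $f$ has constant sign on $(u,v)$. Indeed, suppose not; then there are $t_1,t_2 \in (u,v)$ with $f(t_1) < 0 < f(t_2)$, and by the Intermediate Value Theorem (applicable since $f$ is continuous) there is a point $t_3$ strictly between $t_1$ and $t_2$, hence $t_3 \in (u,v)$, with $f(t_3)=0$, contradicting the hypothesis. Thus the sign of $f$ throughout $(u,v)$ equals the sign of $f$ at any one interior point.

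The algorithm is then: set $m=(u+v)/2$, which lies in $(u,v)$, and evaluate $f(m)$; by hypothesis this takes $\Theta(1)$ time, and $f(m)\ne 0$ since $m$ is interior. If $f(m)>0$ then $f(t)>0$ for every $t\in(u,v)$, so $|f(t)|=f(t)$ there; and $f(u),f(v)$ cannot be negative either, else by the Intermediate Value Theorem applied on $[u,m]$ or $[m,v]$ the function $f$ would vanish at an interior point. Hence $|f(t)|=f(t)$ on the whole closed interval $[u,v]$. Symmetrically, if $f(m)<0$ then $|f(t)|=-f(t)$ for all $t\in[u,v]$. The total cost is a constant number of arithmetic operations plus one evaluation of $f$, i.e., $\Theta(1)$.

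There is essentially no deep obstacle here; the only points requiring care are the degenerate interval $u=v$ and the behavior at the two endpoints, where $f$ may be zero — but in that case $f(u)=|f(u)|=-f(u)$, so either branch reports a correct identity and the statement remains consistent. I would also remark that the lemma only decides \emph{which} of the two candidate functions agrees with $|f|$ on a given zero-free subinterval $[u,v]$, not how to find such subintervals; locating the maximal zero-free subintervals is a separate task, handled using the $\Theta(1)$-time polynomial-root-solving assumption of Section~\ref{motionAssumps}.
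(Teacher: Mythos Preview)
Your proof is correct and follows essentially the same approach as the paper: evaluate $f$ at the midpoint of $[u,v]$ and infer the sign on the whole interval via continuity. You are in fact slightly more careful than the paper, explicitly handling the degenerate case $u=v$ and the possibility that $f$ vanishes at the endpoints.
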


\begin{proof}
In~$\Theta(1)$ time, we can pick~$t'$ such that $u < t' < v$ (e.g., $t'$ could be the
midpoint $(u+v)/2$ of~$[u,v]$). By assumption, $f(t') \neq 0$.
Since there are no zeroes of~$f(t)$ on the interval~$(u,v)$,
the continuity of~$f$ implies
\[ |f(t)| = \left \{
\begin{array}{ll}
    f(t) \mbox{ for all } t \in [u,v] & \mbox{if } f(t') > 0; \\
    -f(t) \mbox{ for all } t \in [u,v] & \mbox{if } f(t') < 0.
\end{array}  \right .
\]
The assertion follows.
\end{proof}

As an immediate consequence, we have the following.

\begin{cor}
    \label{switchPtIs0}
    Let $f: [0,M] \to \R$ be continuous. If~$\tau$ is a switchpoint
    for~$|f(t)|$ based on~$\{f,-f\}$ on~$[0,M]$, then
    $f(\tau)=0$.
\end{cor}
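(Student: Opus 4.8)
The plan is to simply unwind the definition of switchpoint and then read off the sign of $f(\tau)$ from the two pieces that meet at $\tau$. Here the describing family is $\{f,-f\}$; in the notation of Definition~\ref{switchPieceDef} take $f_1=f$ and $f_2=-f$. First I would invoke Definition~\ref{switchPieceDef}: since $\tau$ is a switchpoint of $|f|$ based on $\{f,-f\}$, there exist pieces $(f_i,[u,\tau])$ and $(f_j,[\tau,v])$ of $|f|$ with $i\neq j$. Because the family has only two members, $\{i,j\}=\{1,2\}$, so one of these two pieces carries the function $f$ and the other carries $-f$, and the point $\tau$ lies in the (closed) interval of each of them. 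Remark~\ref{switchInterior} additionally guarantees $\tau$ is interior to $[0,M]$, so both one-sided pieces genuinely exist, although this is not strictly needed for the sign computation that follows.

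Next I would extract the sign information at $\tau$. On the piece whose function is $f$, the defining identity $|f(t)|=f(t)$ holds for every $t$ in that piece's interval; evaluating at $t=\tau$, which belongs to that interval, gives $|f(\tau)|=f(\tau)$, hence $f(\tau)\ge 0$. Symmetrically, on the piece whose function is $-f$, the identity $|f(t)|=-f(t)$ holds on its interval, so $|f(\tau)|=-f(\tau)$, hence $f(\tau)\le 0$. Combining the two inequalities yields $f(\tau)=0$, which is the assertion. Alternatively, one could argue by contradiction using Lemma~\ref{signOfFunction}: if $f(\tau)\neq 0$, continuity of $f$ provides a subinterval of $[0,M]$ containing $\tau$ in its interior on which $f$ never vanishes, and Lemma~\ref{signOfFunction} then forces $|f|$ to coincide with a single one of $f$, $-f$ throughout that subinterval, contradicting the existence of two distinct adjacent pieces meeting at $\tau$.

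I do not expect any genuine obstacle here — the result is "immediate," as the surrounding text indicates. The only point deserving a moment's care is checking that the two pieces abutting at $\tau$ really carry the two \emph{different} functions $f$ and $-f$, rather than repeating a single function; this is exactly what the clause $i\neq j$ in Definition~\ref{switchPieceDef}, together with the fact that the describing family $\{f,-f\}$ has exactly two members, provides.
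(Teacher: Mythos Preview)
Your proof is correct. The paper itself gives no explicit proof, simply introducing the corollary with ``As an immediate consequence'' of Lemma~\ref{signOfFunction}; this is precisely your alternative contradiction argument. Your primary argument --- evaluating the identities $|f|=f$ and $|f|=-f$ at $\tau$ on the two abutting pieces to obtain $f(\tau)\ge 0$ and $f(\tau)\le 0$ --- is a more direct and self-contained route that bypasses Lemma~\ref{signOfFunction} entirely, at the modest cost of spelling out what the paper leaves implicit.
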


\begin{exl}
    \label{notAllPtsOfEquality}
    The converse of Corollary~\ref{switchPtIs0} is not in
    general true, as shown by the following.
    Let $f: [0,2] \to \R$ be given by $f(t) = (t-1)^2$. Note $f(1)=0$.
    However, since $f(t) \ge 0$ for all~$t \in [0,2]$, there is only 
    one piece of~$|f(t)|$ based on $\{f, -f \}$, namely
    the function~$f$ on the entire interval~$[0,2]$.
\end{exl}

\begin{lem}
    \label{absValueLem}
    Let $f: [0,M] \to \R$ be a polynomial function.
    \begin{itemize}
        \item If $f$ is a constant function with value~$c \ge 0$, then
              $(f, [0,M])$ is the only piece of~$|f|$ 
              based on $\{f,-f\}$, on $[0,M]$.
        \item If $f$ is a constant function with value~$c < 0$, then
              $(-f, [0,M])$ is the only piece of~$|f|$
              based on $\{f,-f\}$, on $[0,M]$.
        \item Let $s \in \N$ be bounded. If $f$ has degree $s > 0$, 
        then~$|f|$ has 
        at most~$s$ switchpoints. If~$r$ is the number of
        switchpoints of~$f$ on~$[0,M]$, then~$f$ has
        at most~$r+1$ pieces based on $\{f,-f\}$, on $[0,M]$.
    \end{itemize}
    These pieces can be computed in~$\Theta(1)$ sequential time.
\end{lem}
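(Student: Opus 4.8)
The plan is to handle the three cases of the statement in turn, the two constant cases being immediate and the degree-$s$ case resting on Corollary~\ref{switchPtIs0}, Lemma~\ref{switchPtsAndPieces}, and the motion assumptions of Section~\ref{motionAssumps}. For the constant cases, if $f\equiv c$ with $c\ge 0$ then $|f|(t)=c=f(t)$ for all $t\in[0,M]$, so $(f,[0,M])$ is the unique piece of $|f|$ based on $\{f,-f\}$; recognizing that $f$ is constant, that $c\ge 0$, and emitting this single piece is plainly a $\Theta(1)$ computation. The case $c<0$ is symmetric, giving the unique piece $(-f,[0,M])$.

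For the case $\deg f = s > 0$, I would first bound the number of switchpoints. By Corollary~\ref{switchPtIs0}, every switchpoint $\tau$ of $|f|$ based on $\{f,-f\}$ satisfies $f(\tau)=0$, so the switchpoints lie among the real roots of $f$ in $[0,M]$. Since $f$ is a nonzero polynomial of degree $s$, it has at most $s$ real roots, hence $|f|$ has at most $s$ switchpoints. If $r$ is the number of these switchpoints, then applying Lemma~\ref{switchPtsAndPieces} to the function $|f|$ with the two-function basis $\{f,-f\}$ gives exactly $r+1$ pieces, in particular at most $r+1$ (and hence at most $s+1$) pieces based on $\{f,-f\}$.

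It remains to bound the cost of computing the pieces when $s>0$. Since $s$ is bounded, the motion assumptions of Section~\ref{motionAssumps} let us solve $f(t)=0$ in $\Theta(1)$ time, obtaining the at most $s$ real roots lying in $[0,M]$; sorting these roots together with the endpoints $0$ and $M$ is $\Theta(1)$ work, as we are ordering only $O(s)=\Theta(1)$ values. This yields at most $s+1$ consecutive closed subintervals covering $[0,M]$ on the interior of each of which $f$ has no zero, so Lemma~\ref{signOfFunction} determines in $\Theta(1)$ time per subinterval (and $\Theta(1)$ subintervals) whether $|f|$ agrees with $f$ or with $-f$ there. A final single pass coalesces consecutive subintervals carrying the same function into maximal intervals, as required by Definition~\ref{pieceDef}; this coalescing is necessary exactly because, as Example~\ref{notAllPtsOfEquality} shows, a root of $f$ need not be a switchpoint, and because by Remark~\ref{switchInterior} a root at $0$ or $M$ is never a switchpoint. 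The whole procedure touches only a bounded number of objects, so it runs in $\Theta(1)$ time.

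The one point I would be careful about — and the place a hasty argument could go wrong — is the maximality clause in the definition of a piece: the subintervals obtained by cutting at all roots of $f$ are not themselves pieces in general, and it is the coalescing step, together with the observation that endpoint roots do not count (Remark~\ref{switchInterior}), that justifies the sharper bound of $r+1$ pieces rather than merely $s+1$. Everything else is routine bookkeeping with $\Theta(1)$-sized data.
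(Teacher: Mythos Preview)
Your proof is correct and follows essentially the same approach as the paper: handle the constant cases trivially, then for nonconstant $f$ solve $f(t)=0$, partition $[0,M]$ at the roots (adjoining the endpoints), apply Lemma~\ref{signOfFunction} on each subinterval, and coalesce adjacent subintervals carrying the same sign into maximal pieces. If anything, your write-up is slightly more explicit than the paper's in invoking Corollary~\ref{switchPtIs0} and Lemma~\ref{switchPtsAndPieces} to justify the switchpoint and piece counts before describing the algorithm.
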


\begin{proof}
For a constant function $f(t)$, $|f(t)|$ clearly has only one piece, 
either $(f, [0,M])$ or $(-f, [0,M])$, respectively according as
the constant value~$c$ of~$f$ satisfies $c \ge 0$ or $c < 0$.

Suppose $f$ is not constant.
    If the equation $f(t) = 0$ has no solutions in~$[0,M]$, by 
    assumption this can be determined in~$\Theta(1)$ sequential time. 
    Then $[0,M]$ is the interval of the only piece
    of~$|f(t)|$. By Lemma~\ref{signOfFunction} in an additional~$\Theta(1)$ 
    sequential time, the function of this piece, either~$f$ or~$-f$, can be determined.

   Suppose the equation $f(t) = 0$ has solutions $t_1, \ldots, t_v$, where
    $1 \le v \le s$. Consider the following algorithm.

\begin{alg}
    \label{absValueAlg}
/* Algorithm to find pieces of $|f(t)|$ from functions
$\pm f(t)$ on $[0,M]$, where $f$ is a non-constant polynomial and has 
solutions in $[0,M]$ to the equation $f(t)=0$. */
\begin{itemize}
    \item  Solve $f(t) = 0$ for $t \in [0,M]$ and sort the unique 
 solutions $t_1, \ldots, t_v$ so
        \[ 0 \le t_1 <  \ldots < t_v \le M
        \]
        Then $1 \le v \le s$. Since we regard~$s$ as bounded, this takes $\Theta(1)$ time.
    \item If $t_1 > 0$, insert $t_0 := 0$ into this list, Similarly,
        if $t_v < M$, insert $t_{v+1} := M$ into this list. This takes $\Theta(1)$ time. The
        list now has at most $s+2$ entries, with first 
        index~$a \in \{0,1\}$ and last index~$b \in \{v,v+1\}$.
    \item  Use Lemma~\ref{signOfFunction} to determine 
                         the  function $F$ of $pieceInterval_{a}$ in~$\Theta(1)$ time.
Thus, 
\[ (pieceInterval_a, pieceFunction_a) := ([t_a,t_{a+1}], F).
\] 
\item $(previousLeft, Left) := (a, a+1)$ ~~/* prepare for next interval */
\item
\begin{tabbing} 
 For \= $right := a+2$ to $b$ \\
         \> $pieceInterval_{left} := [t_{left}, t_{right}]$ 
                         ($\Theta(1)$ time) \\
         \> Use \= Lemma~\ref{signOfFunction} to determine 
                         the function $F$ of $pieceInterval_{left}$ \\
                         \> \> in~$\Theta(1)$ time. \\
         \> If $pieceFunction_{previousLeft} = F$ then \\
             \> \> /*  same function; extend previous interval */ \\
                   \> \> $pieceInterval_{previousLeft} :=$ \\
                   \> \> ~~~~$pieceInterval_{previousLeft} \cup [t_{left}, t_{right}]$ \\
                   \> Else /* a new piece */ \\
                   \> \> $pieceFunction_{left} := F$ \\
                   \> \> $pieceInterval_{left} := [t_{left}, t_{right}]$ \\
                   \> \> $previousLeft := left$ /* prepare for next interval */ \\
                   \> End Else /* new piece */ \\
                   \> $left := right$ ~~/* prepare for next interval */ 
              \end{tabbing}
              End For        
       \item  Since we may have extended pieces to adjacent intervals, we
                    may have undefined entries $(pieceFunction_i, pieceInterval_i)$. We can compress the defined entries
                    to be consecutively indexed via a parallel
                    prefix operation 
                    in $\Theta(s) = \Theta(1)$ time, by 
                    Theorem~\ref{CGMprefix}.
    \end{itemize}
\end{alg}

    It is easily seen that the body of the For loop executes in~$\Theta(1)$ time. Therefore,
    the algorithm executes in~$\Theta(s) = \Theta(1)$ time.
\end{proof}

\begin{lem}
    \label{notPieceBd}
    Let $F,G: [0,M] \to \R$ be continuous functions. Let the members of
    $\{F_i\}_{i=1}^m \cup \{G_j\}_{j=1}^n$ 
    be polynomials of degree at most~$s$.
    Suppose~$F$ is described by pieces based on~$\{F_i\}_{i=1}^m$, 
    with~$u$ switchpoints.
    Suppose~$G$ is described by pieces based on~$\{G_j\}_{j=1}^n$,
    with~$v$ switchpoints.
    Then for each of the functions $H \in \{F + G, F-G,F \cdot G \}$,
    if $\tau$ is the endpoint of a piece of~$H$ based on 
    $\{F_i + G_j \mid 1 \le i \le m, 1 \le j \le n\}$ (respectively,
    $\{F_i - G_j \mid 1 \le i \le m, 1 \le j \le n\}$ or
    $\{F_i \cdot G_j \mid 1 \le i \le m, 1 \le j \le n\}$)
    then $\tau$ is an endpoint of a piece of a member
    of~$\{F_i\}_{i=1}^m \cup \{G_j\}_{j=1}^n$.
    Thus $H$ has at most $u+v$ switchpoints in~$[0,M]$.
\end{lem}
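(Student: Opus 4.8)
The plan is to argue contrapositively: if $\tau \in (0,M)$ is \emph{not} an endpoint of any piece of any $F_i$ or $G_j$, then $\tau$ cannot be a switchpoint of $H$. Fix such a $\tau$. First I would use the hypothesis on $F$ together with Definition~\ref{switchPieceDef}: since $F$ has finitely many switchpoints, there is a neighborhood $(\tau-\delta_F, \tau+\delta_F)$ containing no switchpoint of $F$, hence $F$ agrees with a single one of the functions $F_{i_0}$ on that whole neighborhood (the piece of $F$ whose interval contains $\tau$ in its interior). Symmetrically, $G$ agrees with a single $G_{j_0}$ on a neighborhood $(\tau-\delta_G, \tau+\delta_G)$. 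Taking $\delta = \min\{\delta_F, \delta_G\}$, on $(\tau-\delta, \tau+\delta)$ we have $H = F \star G = F_{i_0} \star G_{j_0}$ for the appropriate operation $\star \in \{+,-,\cdot\}$, so $H$ coincides on a neighborhood of $\tau$ with a single member of the candidate family $\{F_i \star G_j\}$. By Definition~\ref{switchPieceDef}, a switchpoint of $H$ requires two pieces with \emph{distinct} functions meeting at $\tau$; but any piece of $H$ abutting $\tau$ must agree with $F_{i_0}\star G_{j_0}$ on its side of $\tau$, and since the piece's function is continuous and equals $F_{i_0}\star G_{j_0}$ on a one-sided neighborhood, it \emph{is} $F_{i_0}\star G_{j_0}$ as a piece. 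Hence $\tau$ is not a switchpoint of $H$, which gives the first assertion.

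For the counting bound: every switchpoint of $H$ is, by the first part, an endpoint of some piece of some $F_i$ or $G_j$. By Remark~\ref{switchInterior} such an endpoint lies in $(0,M)$, so it is a \emph{switchpoint} of the corresponding function's description (an interior common endpoint of two pieces), i.e. it belongs to the switchpoint set of $F$ (if contributed by an $F_i$) or of $G$ (if contributed by a $G_j$). These sets have sizes $u$ and $v$ respectively, so $H$ has at most $u+v$ switchpoints, as claimed.

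The one point that needs a little care — and which I expect to be the main obstacle — is the step identifying "agrees with $F_{i_0}\star G_{j_0}$ on a one-sided neighborhood of $\tau$" with "is the piece $F_{i_0}\star G_{j_0}$." A priori the candidate family might not be a family of \emph{distinct} functions (e.g. $F_i\cdot G_j = F_{i'}\cdot G_{j'}$ for different index pairs), so "the function of a piece" is only well-defined up to that identification; but Definition~\ref{pieceDef} is stated for distinct functions, so one should first pass to the set of \emph{distinct} functions among the $F_i\star G_j$, and the argument goes through verbatim since distinctness is exactly what makes a switchpoint detectable. A second small subtlety: I should note that each $F_i \star G_j$ is again a polynomial (of degree at most $2s$, or at most $s$ for $\pm$), so these are legitimate continuous functions on which to base a piecewise description of $H$; the hypothesis $h|_J = f_k|_J$ on \emph{some} interval around each point of Definition~\ref{pieceDef} is met because locally $H$ equals one of them. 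No estimates beyond these bookkeeping observations are required.
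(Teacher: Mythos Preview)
Your proposal is correct and follows essentially the same route as the paper: argue contrapositively, find a two-sided neighborhood of $\tau$ on which $F$ and $G$ each coincide with a single base function, and conclude that $H$ coincides with a single $F_{i_0}\star G_{j_0}$ there, so $\tau$ is not a switchpoint. Your treatment is in fact more careful than the paper's, which is terse and does not address the distinctness issue for the family $\{F_i\star G_j\}$ or the counting step; your handling of both is sound.
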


\begin{proof}
    We base our proof on the contrapositive: Suppose~$t_0$ is not an
    endpoint of a piece of~$F$ or of~$G$. Then we show~$t_0$ is not an
    endpoint of a piece of~$F-G$. The proof for $F+G$ and the proof 
    for~$F \cdot G$ are virtually identical.

    If $t_0$ is not an endpoint of a piece of~$F$ then the function of~$F$
    does not change its piece at~$t_0$, i.e., for some~$\varepsilon_1 > 0$,
    $F$ is identically equal to some $F_i$ on 
    $[t_0 - \varepsilon_1, t_0 + \varepsilon_1]$. 
    Similarly, if $t_0$ is not an endpoint of a piece 
    of~$G$ then the function of~$G$
    does not change its piece at~$t_0$, i.e., for some~$\varepsilon_2 > 0$,
    $G$ is identically equal to some $G_j$ on $[t_0 - \varepsilon_2, t_0 + \varepsilon_2]$. Therefore, $F-G$ is identically equal to
    $F_i - G_j$ on $[t_0 - \varepsilon_3, t_0 + \varepsilon_3]$, where
    $\varepsilon_3 = \min \{\varepsilon_1, \varepsilon_2\}$. This 
    establishes our assertion.
\end{proof}

\begin{cor}
    \label{multidim-zeroes}
    Let $f : [0,M]^d \to \R$
    where each coordinate function $p_i \circ f: [0,M] \to \R$ is a
    polynomial of degree at
    most~$s$. Then $|f|$ has at most $ds$ switchpoints, hence 
    by Lemma~\ref{notPieceBd} at most $ds+1$ pieces. The
    pieces of $|f|$ based on 
    $\{(\pm (p_1 \circ f), \ldots,  \pm (p_d \circ f)) \}$ can be computed in~$\Theta(1)$ time.
\end{cor}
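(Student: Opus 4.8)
The plan is to interpret $|f|$ as the Manhattan length of the vector-valued function $t \mapsto f(t)$, that is,
\[ |f(t)| = \sum_{i=1}^d \bigl| (p_i \circ f)(t) \bigr|, \]
and then to build the piecewise description of this sum one coordinate at a time, using Lemma~\ref{absValueLem} on each summand $|p_i \circ f|$ and Lemma~\ref{notPieceBd} on the running partial sums. Write $\phi_i := p_i \circ f$, a polynomial of degree at most $s$, and set $H_k := \sum_{i=1}^k |\phi_i|$, so that $H_d = |f|$. I would prove by induction on $k$ that $H_k$ is described by pieces based on $\mathcal{F}_k := \{\,\sum_{i=1}^k \varepsilon_i \phi_i \mid \varepsilon_i \in \{+1,-1\}\,\}$ with at most $ks$ switchpoints on $[0,M]$.

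For the base case, Lemma~\ref{absValueLem} gives that $|\phi_1| = H_1$ has at most $s$ switchpoints and is described by pieces based on $\{\phi_1, -\phi_1\} = \mathcal{F}_1$ (with no switchpoints if $\phi_1$ is constant). For the inductive step, write $H_{k+1} = H_k + |\phi_{k+1}|$ and apply Lemma~\ref{notPieceBd} with $F = H_k$ (based on $\mathcal{F}_k$, at most $ks$ switchpoints) and $G = |\phi_{k+1}|$ (based on $\{\phi_{k+1}, -\phi_{k+1}\}$, at most $s$ switchpoints): the sum $F + G = H_{k+1}$ is described by pieces based on $\{\,F_i + G_j\,\}$, which is exactly $\mathcal{F}_{k+1}$, and has at most $ks + s = (k+1)s$ switchpoints. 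Taking $k = d$, the function $|f|$ has at most $ds$ switchpoints and is described by pieces based on $\mathcal{F}_d = \{(\pm\phi_1, \ldots, \pm\phi_d)\}$ (identifying a sign tuple with the corresponding sum $\sum_i \varepsilon_i \phi_i$); by Lemma~\ref{switchPtsAndPieces} it therefore has at most $ds+1$ pieces.

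For the running-time claim I would use that $d$ and $s$ are regarded as bounded, as in Section~\ref{motionAssumps}. By Lemma~\ref{absValueLem} the pieces of each $|\phi_i|$ are computable in $\Theta(1)$ time, so this costs $\Theta(d) = \Theta(1)$ time over all coordinates. To assemble $|f|$, merge the at most $ds$ switchpoints of the $|\phi_i|$ into a sorted list of at most $ds + 2$ subinterval endpoints of $[0,M]$; then, on the interior of each resulting subinterval and for each $i$, invoke Lemma~\ref{signOfFunction} to decide whether $|\phi_i|$ equals $\phi_i$ or $-\phi_i$ there, which pins down the sign vector and hence the member of $\mathcal{F}_d$ realizing that subinterval; finally coalesce adjacent subintervals carrying the same function, exactly as in the loop of Algorithm~\ref{absValueAlg}. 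Every step here manipulates $\Theta(ds) = \Theta(1)$ items, so the total cost is $\Theta(1)$.

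The point requiring the most care is the bookkeeping in the inductive step: one must verify both that the function family attached to $H_{k+1}$ is \emph{precisely} $\mathcal{F}_{k+1}$, so that the final family is the claimed set of sign-pattern tuples, and that $H_{k+1}$ introduces no switchpoints beyond those it inherits. Both facts are delivered by Lemma~\ref{notPieceBd}, once one notes that a switchpoint of $H_{k+1}$, being a piece endpoint interior to $[0,M]$, must be either a switchpoint of $H_k$ (at most $ks$ of these) or a zero of $\phi_{k+1}$ that is a switchpoint of $|\phi_{k+1}|$ (at most $s$ of these). A minor separate case is a constant coordinate $\phi_i$, which contributes no switchpoints and a fixed sign and so only improves the bounds; and, as usual, the $\Theta(1)$ time bound is meaningful only under the standing convention that the dimension $d$, like the degree $s$, is bounded.
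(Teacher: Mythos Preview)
Your proposal is correct and follows essentially the same approach as the paper: interpret $|f|$ as the Manhattan sum $\sum_i |p_i\circ f|$, use Lemma~\ref{absValueLem} to bound each summand by~$s$ switchpoints, and use Lemma~\ref{notPieceBd} to conclude the sum has at most~$ds$ switchpoints. The paper's proof is a two-line sketch that invokes these lemmas without spelling out the iteration over coordinates; your explicit induction on the partial sums $H_k$ simply makes rigorous what the paper leaves implicit (since Lemma~\ref{notPieceBd} is stated only for two functions at a time), and your running-time argument is likewise a more detailed unpacking of the same $\Theta(ds)=\Theta(1)$ reasoning.
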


\begin{proof}
Recall we use
\[ |f(t)| = \sum_{i=1}^d |p_i(f(t))|.
\]
    By Lemma~\ref{notPieceBd}, any switchpoint of~$|f|$ must be 
    a switchpoint of some~$p_i(f)$.
    The assertion follows from Lemma~\ref{absValueLem}.
\end{proof}

\begin{cor}
    \label{piecesOfDistFunct}
    Let $Q$ be a system of moving point objects as in
    section~\ref{motionAssumps}. For $i \neq j$, the distance
    function
    \[ D_{i,j}(t) = \sum_{k=1}^n |p_k(f_i(t)) - p_k(f_j(t))| 
    \]
    has at most $ds$ switchpoints, hence at most
    $ds + 1$ pieces based on the members 
    of the set of polynomial functions
    \[ \left \{ \sum_{k=1}^d \pm [p_k(f_i(t)) - p_k(f_j(t))]
       \mid 1 \le i < j \le n
       \right \}.
    \]
    These pieces may be computed in $\Theta(1)$ sequential time.
\end{cor}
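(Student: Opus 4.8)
The plan is to recognize $D_{i,j}$ as the Manhattan norm of a single vector-valued polynomial map and then invoke Corollary~\ref{multidim-zeroes} verbatim. First I would set $f := f_i - f_j : [0,M] \to \R^d$, so that for each coordinate index $k$ we have $p_k \circ f = (p_k \circ f_i) - (p_k \circ f_j)$. By the motion assumptions of section~\ref{motionAssumps}, each $p_k \circ f_i$ and each $p_k \circ f_j$ is a polynomial in~$t$, and (under the standing convention that these have bounded degree, say at most~$s$) the difference $p_k \circ f$ is again a polynomial of degree at most~$s$. Hence $f$ satisfies the hypotheses of Corollary~\ref{multidim-zeroes}.

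Next I would note that in the Manhattan metric
\[ D_{i,j}(t) = \sum_{k=1}^{d} |p_k(f_i(t)) - p_k(f_j(t))| = \sum_{k=1}^{d} |p_k(f(t))| = |f(t)|, \]
so $D_{i,j}$ is exactly the function $|f|$ of Corollary~\ref{multidim-zeroes}. That corollary then yields immediately that $|f|$, and therefore $D_{i,j}$, has at most $ds$ switchpoints and hence at most $ds+1$ pieces, taken with respect to the functions $\sum_{k=1}^{d} \pm (p_k \circ f) = \sum_{k=1}^{d} \pm [p_k(f_i(t)) - p_k(f_j(t))]$, which is precisely the ``based on'' set in the statement; the $\Theta(1)$-time bound for computing these pieces is likewise inherited from Corollary~\ref{multidim-zeroes}.

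The remaining points are purely bookkeeping and I do not expect any genuine obstacle. (i) Forming the coordinatewise difference $f_i - f_j$ preserves the ``polynomial of bounded degree'' property and costs only $\Theta(1)$ time, since subtracting two bounded-degree polynomials is a $\Theta(1)$ operation. (ii) One must reconcile the sum-of-signs description of the ``based on'' set in the statement with the tuple-style notation $\{(\pm(p_1 \circ f), \ldots, \pm(p_d \circ f))\}$ used in Corollary~\ref{multidim-zeroes}: each choice of signs $\varepsilon_1, \ldots, \varepsilon_d \in \{+1,-1\}$ corresponds to the single polynomial $\sum_{k=1}^{d} \varepsilon_k (p_k \circ f)$, and there are $2^d$ of these. (iii) The ``$n$'' appearing in the displayed definition of $D_{i,j}$ should be read as ``$d$'', so that the sum ranges over the $d$ coordinates. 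With these observations the corollary follows with no further computation.
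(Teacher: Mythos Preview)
Your proposal is correct and is essentially the same approach the paper takes: the paper's proof is the single line ``This follows from Corollary~\ref{multidim-zeroes},'' and your argument simply unpacks that invocation by setting $f = f_i - f_j$ and observing $D_{i,j} = |f|$. Your bookkeeping remarks (the $n$-versus-$d$ typo and the $2^d$ sign choices) are accurate clarifications that the paper leaves implicit.
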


\begin{proof}
    This follows from Corollary~\ref{multidim-zeroes}.
\end{proof}

\subsection{Solving the 3 aligned problem}

We are interested in the first occurrence (if any) of an
$\varepsilon$-approximately collinear triple $(q_1, q_i, q_j)$, where
$1 < i<j \le n$. As the time-dependent functions 
$A(t),B(t),C(t)$ we will use are continuous, either an 
$\varepsilon$-approximately collinear triple exists at $t=0$
or the first occurrence (if any) exists, as in
Figure~\ref{fig:3-gon}~(Right) when
\[ \varepsilon \in \{|A + B - C|,~~|A+C-B|,~~|B+C-A| \}.
\]

\begin{thm}
    \label{combine}    
    (Suggested by Lemma~2.5 of~{\rm \cite{BM89a}}.)
    Let $F, G: [0,M] \to \R$ be continuous functions.
    Let $F$ be described by pieces based on the members of 
    ${\cal F} := \{f_i\}_{i=1}^k$.
    Let $G$ be described by pieces based on the members of 
    ${\cal G} := \{g_j\}_{j=1}^m$. Suppose for each pair $(h_1,h_2)$
    of distinct members of ${\cal F} \cup {\cal G}$, the graphs
    of~$h_1$ and $h_2$ have at most~$s$ points of intersection
    in~$[0,M]$.
Then each of the functions $F + G$, $F-G$, and $F \cdot G$ has at
most $(k+m)s$ switchpoints, hence at
    most~$(k+m)s+1$ pieces based on, respectively,
    \[ \{f_i + g_j \mid f_i \in {\cal F}, g_j \in {\cal G} \},
       ~~~ \{f_i - g_j \mid f_i \in {\cal F}, g_j \in {\cal G} \},
       \mbox{ and } \{f_i \cdot g_j \mid f_i \in {\cal F}, g_j \in {\cal G} \},
    \]
    and these pieces can be computed in~$O(kms^2 + (k+m)s)$ time. If~$k,m$,
    and~$s$ are regarded as bounded, the time estimate reduces
    to~$\Theta(1)$.
\end{thm}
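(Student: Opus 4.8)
The plan is to assemble the statement from two ingredients already in hand: the combinatorial bound of Lemma~\ref{notPieceBd}, and a single merge of two sorted lists of intervals. Throughout I would write $H$ for a generic one of $F+G$, $F-G$, $F\cdot G$, and $\oplus$ for the matching pointwise operation on base functions, since the three cases run word for word the same way, exactly as in the proof of Lemma~\ref{notPieceBd}. I would also use the hypothesis ``$F$ is described by pieces based on $\mathcal F=\{f_i\}_{i=1}^k$'' in the form in which it actually gets used, namely that $F$ has at most $k$ switchpoints on $[0,M]$ (equivalently, by Lemma~\ref{switchPtsAndPieces}, at most $k+1$ pieces), and likewise that $G$ has at most $m$ switchpoints.

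\emph{Combinatorial half.} For the switchpoint and piece counts, Lemma~\ref{notPieceBd} applies verbatim to $\mathcal F$ and $\mathcal G$: any endpoint of a piece of $H$ based on $\{ f_i\oplus g_j \mid f_i\in\mathcal F, g_j\in\mathcal G \}$ is an endpoint of a piece of some member of $\mathcal F\cup\mathcal G$, hence every switchpoint of $H$ is a switchpoint of $F$ or of $G$. Therefore $H$ has at most $k+m$ switchpoints, and by Lemma~\ref{switchPtsAndPieces} at most $k+m+1$ pieces based on the indicated set. Note that this half uses neither the integer $s$ nor the $s$-fold-intersection hypothesis; those surface only in the construction and timing.

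\emph{Construction and timing.} Take $F$ as its sorted piece list $(f_{i_0},[\tau_0,\tau_1]),\dots,(f_{i_k},[\tau_k,\tau_{k+1}])$ with $0=\tau_0<\dots<\tau_{k+1}=M$, and $G$ similarly with breakpoints $0=\sigma_0<\dots<\sigma_{m+1}=M$. Merge the two sorted breakpoint sequences in a single $\Theta(k+m)$ pass; the resulting common refinement has at most $(k+1)+(m+1)-1=k+m+1$ subintervals, and the merge simultaneously records, for each refined subinterval $I$, the $F$-piece and $G$-piece covering it, so that $F\equiv f_i$ and $G\equiv g_j$ on $I$ for known $i,j$, whence $H\equiv f_i\oplus g_j$ on $I$. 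This already outputs, in $\Theta(k+m)$ time, a valid description of $H$ by at most $k+m+1$ labelled subintervals, the labels coming from the required set. One further $\Theta(k+m)$ pass coalesces consecutive subintervals whose labels coincide, each equality test costing $\Theta(1)$ in the model of section~\ref{motionAssumps} (for polynomials of degree at most $s$, a comparison of at most $s+1$ coefficients), to yield the genuinely maximal pieces. Since the two input lists already have total length $\Theta(k+m)$, $\Omega(k+m)$ time is unavoidable, so $\Theta(k+m)$ is optimal; when $k$, $m$, $s$ are treated as bounded it reduces to $\Theta(1)$.

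\emph{Where the work is.} The one point requiring care is the legitimacy of the coalescing pass: merging two equally labelled neighbors is plainly safe (then $H$ agrees with a single base function across the join), but one must check that a breakpoint surviving the coalescing really is a switchpoint of $H$, i.e., that distinct combined functions $f_i\oplus g_j\neq f_{i'}\oplus g_{j'}$ cannot agree throughout a subinterval. This is where the $s$-fold-intersection hypothesis enters: in the polynomial incarnation of the model a nonzero polynomial of degree at most $s$ has at most $s$ roots, so agreement on an interval forces identity and the global coefficient-equality test detects every mergeable pair; Example~\ref{notAllPtsOfEquality} illustrates the tangency behaviour one has to be wary of. Everything else is routine list-merging bookkeeping, together with the observation that Definition~\ref{pieceDef} wants the labels restricted to the distinct combined functions that actually occur. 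I expect no deeper difficulty: the theorem is in essence ``Lemma~\ref{notPieceBd} plus a two-list merge,'' and the proof's task is to lay that out cleanly.
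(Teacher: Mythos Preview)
Your proposal is correct and follows essentially the same route as the paper: invoke Lemma~\ref{notPieceBd} for the switchpoint bound, then merge the two sorted piece lists to build the description in linear time. Your version is in fact cleaner than the paper's own proof---the paper, reading $k$ and $m$ as the sizes of $\mathcal F$ and $\mathcal G$, actually derives only $(k+m)s$ switchpoints and $(k+m)s+1$ pieces, whereas your reinterpretation of $k,m$ as switchpoint counts is what makes the stated $k+m$ bound come out, and your explicit discussion of where the $s$-intersection hypothesis enters (only in the coalescing/identity test) is a point the paper leaves implicit.
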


\begin{proof}
We give a proof for $F - G$; the proofs for the others are virtually identical.

Let ${\cal F'}$ be the ordered union of switchpoints of members of~${\cal F}$ on~$[0,M]$,
\[ t_1 < t_2 < \ldots < t_u, \mbox{ for } u \le ks. \]
Let ${\cal G'}$ be the union of switchpoints of members 
of~${\cal G}$ on~$[0,M]$,
\[ w_1 < w_2 < \ldots < w_v, \mbox{ for } v \le ms. \]
Note 
\[ \#({\cal F'} \cup {\cal G'}) \le ks + ms = (k+m)s.
\]
By Lemma~\ref{notPieceBd}, a switchpoint of $F-G$ must come from
${\cal F'} \cup {\cal G'}$, so there are at most
$(k+m)s$ switchpoints of $F-G$.

As in Lemma~\ref{absValueLem}, there are at most~$(k+m)s+1$ pieces
of~$F-G$ based 
on the members of~$\{f_i - g_j \mid f_i \in {\cal F}, g_j \in {\cal G} \}$, on~$[0,M]$.

We give our algorithm for $F-G$; the others can be handled similarly.

\begin{alg}
    \label{combineAlg}
    Compute the pieces of $F-G$ (or $F+G$ or $F \cdot G$),
    where $F$ and $G$ are functions as described above.

    If it is not known to have been done previously, sort the pieces of~$F$
    in ascending order by the left endpoints of their intervals. Since~$s$ is
    bounded, this is done in $\Theta(1)$ time.

     If it is not known to have been done previously, sort the pieces of~$G$
    in ascending order by the left endpoints of their intervals. As above, 
    this is done in $\Theta(1)$ time.

    For each pair $(\cal P, \cal Q)$ of pieces of~$F$ and of~$G$, respectively,
    use Lemma~\ref{combine} to compute the pieces of $F-G$ on
    the intersection of the interval of~$\cal P$ and the interval
    of~$\cal Q$ in~$O(1)$ time. There are at most $kms^2$ such pairs and
    at most $(k+m)s+1$ switchpoints of $F-G$. The function $F - G$
    has at most $(k+m)s+1$
    pieces based on the members of~$\{f_i - g_j \mid f_i \in {\cal F}, g_j \in {\cal G} \}$, on~$[0,M]$.
    This step executes in $O((k+m)s+1)$ time.
\end{alg}
Clearly, this algorithm runs in $O(kms^2 + (k+m)s+1)$ time.
  
If~$k$ and $m$ are regarded as bounded, the time estimate reduces
    to~$\Theta(1)$.
\end{proof}

\begin{cor}
    \label{combine3}
     Let $F, G, H: [0,M] \to \R$ be continuous functions.
    Let $F$ be described by $k$ switchpoints based on the members of 
    $\{f_i\}_{i=1}^a$.
    Let $G$ be described by $m$ switchpoints based on the members of 
    $\{g_j\}_{j=1}^b$. 
     Let $H$ be described by $r$ switchpoints based on the members of 
    $\{h_u\}_{u=1}^c$.
   Then each of the functions $\pm F \pm G \pm H$ has at
    most~$k+m+r$ switchpoints based on the members of
\[ \{ \pm f_i \pm g_j \pm h_u \mid 1 \le i \le a,~1 \le j \le b,~1 \le u \le c \}.
\] 
    If~$k,m,r,s$ are regarded as bounded, then all these switchpoints can be computed in~$\Theta(1)$ time.
\end{cor}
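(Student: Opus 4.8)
The plan is to reduce Corollary~\ref{combine3} to two applications of Theorem~\ref{combine}. Since $\pm F \pm G \pm H$ is just an iterated combination of the binary operations $+$ and $-$, I would associate it as $(\pm F \pm G) \pm H$ and apply Theorem~\ref{combine} twice: once to produce a description of $F' := \pm F \pm G$, and once more to combine $F'$ with $\pm H$. The only real content beyond bookkeeping is checking that the hypotheses of Theorem~\ref{combine} are met at the second application, in particular that the \emph{new} base function family $\{\pm f_i \pm g_j\}$ continues to satisfy a ``bounded number of pairwise intersections'' condition so that the theorem can be reapplied.

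First I would set $F' = \pm F \pm G$. By Theorem~\ref{combine} applied to $F$ and $G$ (with base families $\{f_i\}_{i=1}^a$ and $\{g_j\}_{j=1}^b$, both of bounded size, and with bounded pairwise intersection count $s$), the function $F'$ has at most $k+m$ switchpoints, is described by pieces based on the $\le ab$ functions $\{\pm f_i \pm g_j\}$, and this description is computed in $\Theta(a+b)$ time. Since $a$ and $b$ are bounded, the new family $\{\pm f_i \pm g_j\}$ has bounded size; and because each $f_i, g_j$ is a polynomial of bounded degree (they come from the motion assumptions of section~\ref{motionAssumps}, or more directly are hypothesized to be polynomials of degree at most~$s$ as in Lemma~\ref{notPieceBd}), each $\pm f_i \pm g_j$ is again a polynomial of bounded degree, so any two distinct ones meet in a bounded number~$s'$ of points. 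Hence the hypotheses of Theorem~\ref{combine} hold for the pair $(F', H)$ with base families $\{\pm f_i \pm g_j\}$ (size $\le ab$) and $\{h_u\}_{u=1}^c$.

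Next I would apply Theorem~\ref{combine} a second time, to $F'$ and $\pm H$. Note $F'$ has at most $k+m$ switchpoints and $H$ has $r$ switchpoints, so the theorem gives that $\pm F' \pm H = \pm F \pm G \pm H$ has at most $(k+m)+r = k+m+r$ switchpoints, described by pieces based on the members of $\{\pm f_i \pm g_j \pm h_u\}$, computed in $\Theta(ab + c)$ time. Since $a, b, c$ are bounded this is $\Theta(a+b+c)$ (absorbing the constant $ab/ (a+b)$ factor), and if additionally $k, m, r$ are regarded as bounded it reduces to $\Theta(1)$. The switchpoint-versus-piece count statement follows from Lemma~\ref{switchPtsAndPieces} exactly as in the binary case, and the products $F\cdot G \cdot H$ variant would follow identically by replacing $\pm$ with $\cdot$ throughout.

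I expect the main obstacle to be purely one of \emph{verifying that the intermediate family satisfies the pairwise-intersection hypothesis} needed to reinvoke Theorem~\ref{combine}: one must observe that linear combinations of bounded-degree polynomials are again bounded-degree polynomials, so the number $s'$ of pairwise intersections among the $\{\pm f_i \pm g_j\}$ is still bounded (by twice the common degree, say), which is what lets the time bound remain $\Theta(1)$ under the boundedness assumptions. Beyond that, the argument is a routine induction/association; there is no genuinely new idea, only careful tracking of which base family each intermediate function is described over, so that the final family is seen to be $\{\pm f_i \pm g_j \pm h_u\}$ rather than something larger.
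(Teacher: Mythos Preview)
Your proposal is correct and follows the same approach as the paper: apply Theorem~\ref{combine} first to $F$ and $G$, then to the resulting $\pm F \pm G$ and $H$. The paper's proof is a single sentence to this effect; your version is simply more careful about re-verifying the bounded pairwise-intersection hypothesis for the intermediate family $\{\pm f_i \pm g_j\}$, which the paper leaves implicit.
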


\begin{proof}
    This follows by applying Theorem~\ref{combine}, first to
    $F$ and $G$, then to~$F-G$ and~$H$.
\end{proof}

\begin{remark}
\label{distSwitches}
    Assume moving point-objects as in section~\ref{motionAssumps}.
    Using the Manhattan metric, $i \neq j$ implies
    \[ D_{i,j}(t) := d(f_i(t), f_j(t)) = \sum_{k=1}^d |p_k(f_i(t)) - p_k(f_j(t))|
    \]
    The polynomial function of a piece of $|p_k(f_i(t)) - p_k(f_j(t))|$ is either
    \[ p_k(f_i(t)) - p_k(f_j(t)) ~~~ \mbox{or} ~~~-[p_k(f_i(t)) - p_k(f_j(t))].
    \]
    There are at most~$s$ solutions of $p_k(f_i(t)) - p_k(f_j(t))=0$.
    Let zeroes in [0,M] of
    $p_k(f_i(t)) - p_k(f_j(t))$ be $Z_k = \{t_{k,1}. \ldots, t_{k,u_k} \}$, where
    $0 \le u_k \le s$. Then the pieces of $D_{i,j}(t)$ have at 
    most~$ds$ switchpoints in~$[0,M]$ at which some $k^{th}$ component
    switches its base function. Therefore, $D_{i,j}(t)$ has at 
    most~$ds + 1$ pieces based on $\{ \pm [p_k(f_i) -  p_k(f_j)] \}_{k=1}^d$.
\end{remark}

We are ready to consider our measure of (approximate) collinearity:
Since by the Triangle Inequality we must have
\[
    |f_i(t) - f_j(t)| \le |f_i(t) - f_1(t)| + |f_1(t) - f_j(t)|,
\] 
$q_1$ is approximately collinear with, and ``between", $q_i$ and $q_j$,
if for sufficiently small $\varepsilon > 0$,
\begin{equation}
\label{approxCollinIneq}
  |f_i(t) - f_j(t)| + \varepsilon \ge |f_i(t) - f_1(t)| + |f_1(t) - f_j(t)|
\end{equation}
Similar inequalities apply for approximate collinearity with either~$q_i$ 
``between"~$q_1$ and~$q_j$, or~$q_j$ ``between"~$q_1$ and~$q_i$.

We have that either
\begin{itemize}
    \item Approximate collinearity is satisfied at $t=0$, i.e.,
    \begin{equation}
    \label{useForTime0}
        |f_i(0) - f_j(0)| + \varepsilon \ge |f_i(0) - f_1(0)| + |f_1(0) - f_j(0)|;
    \end{equation}
     or similarly for the other ``between" possibilities, or
    \item by continuity, (\ref{approxCollinIneq}) is first satisfied for
    \begin{equation}
\label{approxCollinEq}
    \varepsilon =
    \sum_{k=1}^d |p_k[f_i(t) - f_1(t)]| +
    \sum_{k=1}^d |p_k[f_1(t) - f_J(t)]| -
    \sum_{k=1}^d |p_k[f_i(t) - f_j(t)]|
\end{equation} 
or similarly for the other ``between" possibilities.
In this case, Corollary~\ref{multidim-zeroes} and
Corollary~\ref{combine3} yield that equation ~(\ref{approxCollinEq}) 
has at most ~$3ds$ switchpoints and the pieces of the right side
of~(\ref{approxCollinEq}), based on the members of
\[ \{\pm p_k[f_i(t) - f_1(t)] \pm p_k[f_1(t) - f_k(t)]
     \pm p_k[f_1(t) - f_u(t)] \mid 1 \le k \le d \}
\]
can be computed in~$\Theta(ds)$ sequential
time. If we assume $d$ and $s$ are bounded, then the running time
reduces to~$\Theta(1)$; or
   \item (\ref{approxCollinIneq}) has no solution in $[0,M]$.
\end{itemize}

Recall the following, in which the symbol $f'$ stands for
the first derivative of~$f$.

\begin{lem}
    \label{pieceExtrema}
    Let $f: [a,b] \to \R$ be a differentiable function.
    Then each of $\min\{f(t) \mid t \in [a,b]\}$
    and $\max\{f(t) \mid t \in [a,b]\}$ occurs at some member of
    \[ \{a,b\} \cup \{t \in [a,b] \mid f'(t)=0 \}.
    \]
\end{lem}

\begin{remark}
    \label{whatsCentered}
    Note in the following Algorithm~\ref{3alignedAlg} we consider
    \[ candidates :=  \{ t \in [0,M] \mid 
        |a(t) + b(t) - c(t)| = \varepsilon \} \]
   \[    ~\cup~ \{ t \in [0,M] \mid 
        |a(t) + c(t) - b(t)| = \varepsilon \} \]
\[        \cup~\{ t \in [0,M] \mid 
        |b(t) + c(t) - a(t)| = \varepsilon \}.
    \]
    This enables us to consider instants in which any of
    $p_1(t),~p_i(t)$, or $p_j(t)$ is the middle point of a
    (near) collinear triple. If, instead, we wish to consider only
    instants when $p_1(t)$ is the middle point of a
    (near) collinear triple, we would use
        \[ candidates := \{ t \in [0,M] \mid 
        |a(t) + b(t) - c(t)| = \varepsilon \}. \]
\end{remark}

\subsection{Sequential solution}
In the following algorithm, we compute, for all index pairs $(i,j)$ such that
    $1 < i < j \le n$, $f_1(0)$, $f_i(0)$, and $f_j(0)$. This lets us
    determine from~(\ref{useForTime0}) any (if existing) solution
    occurs at $t=0$. If no such solution exists, we test for another solution. 

\begin{alg}
    \label{3alignedAlg}
    Algorithm to find first (if any) instance of an 
    $\varepsilon$-approximately collinear triple of $q_1$ and two 
    other members of~$Q$, sequentially. Assume $\varepsilon > 0$ 
    is given.

    \begin{quote}
     \begin{tabbing}
        $tFirst = \infty$, $i=2$, $j=3$ \\
        While \= $tFirst > 0$ and $i < n$ \\
        \> Com\=pute \= descriptions of functions \\
        \> \> $a(t) := d(f_1(t), f_i(t))$,~~~$b(t) :=d(f_1(t),f_j(t))$, \\
            \> \> \> $c(t) := d(f_i(t),f_j(t))$ \\
            \> If $|a(0) + b(0) - c(0)| \le \varepsilon$
            or $|a(0) + c(0) - b(0)| \le \varepsilon$ \\
            \> \> or $|b(0) + c(0) - a(0)| \le \varepsilon$ then tFirst = 0 \\
        \> Else /* min not found at $t=0$  for current $i,j$ */ \\     
        \> \> candi\=dates $:= $  \= $\{ t \in [0,M] \mid 
        |a(t) + b(t) - c(t)| = \varepsilon \}$ \\
        \> \> \> $\cup \{ t \in [0,M] \mid 
        |a(t) + c(t) - b(t)| = \varepsilon \}$ \\
        \> \> \> $\cup \{ t \in [0,M] \mid 
        |b(t) + c(t) - a(t)| = \varepsilon \}$ \\
        \> \>  If \= $candidates \neq \emptyset$ then \\
        \> \>  \> $newCandidate := \min \{t \in candidates\}$ \\
        \> \>  \> If $newCandidate < tFirst$ then $tFirst = newCandidate$ \\
        \> \> End If $candidates \neq \emptyset$ \\
        \> \> $j = j + 1$ \\
        \> \>  If $j > n$ then $i = i + 1$ and $j = i + 1$ \\
        \> End Else /* min not found at 0 */ \\
        End While \\
        Return tFirst \\     
    \end{tabbing}
    \end{quote}
\end{alg}    
    
\begin{thm}
    Algorithm \ref{3alignedAlg} runs in worst-case optimal $\Theta(n^2)$ time
    and in best-case $\Theta(1)$ time..
\end{thm}

\begin{proof}
    Since each of $a(t), b(t), c(t)$ has at most~$ds$ switchpoints,
each computation of~$candidates$ requires $\Theta(1)$ time. 
Since in the worst case there are~$\Theta(n^2)$ pairs~$(i,j)$ that must be
considered, it follows that the worst-case sequential running time of the 
algorithm is~$\Theta(n^2)$, which is optimal.

In the best case, we discover a solution at $t=0$ for $i=2$ and $j=3$, so that
the While loop body is executed just once. Thus, in this case, $\Theta(1)$ time is used.
\end{proof}

\section{Further remarks}
We have studied the Too Close, Too Far, and 3 Aligned problems in
dynamic computational geometry, for~$n$ point-objects in motion as
described in secction~\ref{motionAssumps}. We have given algorithms 
for these problems whose running times are all worst-case optimal. 
In particular, the running times are as follows.
\[
\begin{array}{lll}
    \underline{Problem} & \underline{Model~of~computation} & 
         \underline{Worst-case~time} \\
     Too~Close & Sequential & \Theta(n) \\
     Too~Close & CGM(n,p) & \Theta(n/p) \\
     Too~Far & Sequential & \Theta(n) \\
     Too~Far & CGM(n,p) & \Theta(n/p) \\
     3~Aligned & Sequential & \Theta(n^2)
\end{array}
\]

We are grateful to the anonymous referees for useful suggestions.

\section{Declarations}
\begin{itemize}
    \item Competing Interests: none
    \item Funding Information: This research did not receive any specific grant from funding agencies in the public, commercial, or not-for-profit sectors.
    \item Author contribution: The author wrote this
          entire paper.
    \item Data Availability Statement: Not Applicable
    \item Research Involving Human and/or Animals: Not
          Applicable
    \item Informed Consent: Not Applicable
\end{itemize}


\begin{thebibliography}{11}
\bibitem{Atallah}
M.J. Atallah,
Some dynamic computational geometry problems,
{\em Computers and Mathematics with Applications}
11 (12) (1985), 1171 -- 1181.

\bibitem{BT}
J.-D. Boissonnat and M. Teillaud, eds.,
{\em Effective Computational Geometry for Curves and Surfaces},
Springer, 2006

\bibitem{BM89a}
L. Boxer and R. Miller, 
Parallel Dynamic Computational Geometry, 
{\em Journal of New Generation Computer Systems}
2 (1989), 227--246

\bibitem{BM89b}
L. Boxer and R. Miller, 
Dynamic computational geometry on meshes and hypercubes, 
{\em Journal of Supercomputing} 3 (1989), 161--191

\bibitem{BM04}
L. Boxer and R. Miller,
Coarse grained gather and scatter operations with applications, 
{\em Journal of Parallel and Distributed Computing}, 64 (2004), 1297--1320

\bibitem{BM10-11}
L. Boxer and R. Miller,
Efficient coarse grained data distributions and string pattern matching, 
{\em International Journal of Information and Systems Sciences}
6 (4) (2010), 424--434.
\newline
{\em Reprinted as} \newline 
L. Boxer and R. Miller,
Efficient coarse grained data distributions and 
string pattern matching, 
{\em International Journal of Information and Systems Sciences}
7 (2) (2011), 214--224

\bibitem{BMR98}
 L. Boxer, R. Miller, and A. Rau-Chaplin, Scaleable parallel algorithms for lower envelopes with applications, {\em Journal of Parallel and Distributed Computing} 
 53 (1998), 91--118.

\bibitem{BMR99}
L. Boxer, R. Miller, and A. Rau-Chaplin,
Scalable Parallel Algorithms for Geometric Pattern Recognition, 
{\em Journal of Parallel and Distributed Computing} 58 (1999), 466-486
 
%\bibitem{DavSchinz}
%H. Davenport and A. Schinzel. A combinatorial problem connected with 
%differential equations,
%{\em American Journal of Mathematics}
%87 (1965), 684-694.

\bibitem{deBergEtal}
M. de Berg, O. Cheong, M. van Kreveld, and M. Overmars,
{\em Computational Geometry: Algorithms and Applications}, 3rd Ed.,
Springer, 2008

\bibitem{DFR93}
F. Dehne, A. Fabri, and A. Rau-Chaplin.
Scalable parallel geometric algorithms for multi-computers,
{\em Proceedings 9th ACM Symposium on Computational Geometry}
1993, 298--307

\begin{comment}
\bibitem{DFR96}
F. Dehne, A. Fabri, and A. Rau-Chaplin,
Scalable parallel geometric algorithms for coarse grained multicomputers,
{\em Int. Journal of Computational Geometry and Applications} 6 (1996), 379--400.

\bibitem{HS86}
S. Hart and M. Sharir, Nonlinearity of Davenport-Schinzel sequences and of 
generalized path compression schemes. 
{\em Combinatorica} 6 (1986), 151-177.
\end{comment}

\bibitem{MB}
R. Miller and L. Boxer,
{\em Algorithms Sequential and Parallel, A Unified Approach}, 3rd ed.,
Cengage Learning, Boston, 2013.

\begin{comment}
\bibitem{Pettie}
S. Pettie,
Sharp bounds on Davenport-Schinzel sequences of every order,
{\em Journal of the ACM}
62 (5) (2015), 1 -- 40.
\end{comment}

\bibitem{PrepAndShamos}
F.P. Preparata and M.I. Shamos,
{\em Computational Geometry: An Introduction},
New York, Springer--Verlag, 1985


\begin{comment}
\bibitem{Sharir87}
M. Sharir, Almost linear upper bounds on the length of
general Davenport-Schinzel sequences,
{\em Combinatorica}
7 (1987), 131--143.

\bibitem{Sz}
E. Szemeredi. On a problem of Davenport and Schinzel,
%Acra Arith. 
{\em Acta Arithmetica}
25. 213--224, 1974.
\end{comment}

\bibitem{TraEtal}
G. Trajcevski, H. Ding, P. Scheuermann,
R. Tamassia, and D. Vaccaro,
Dynamics-aware similarity of moving objects trajectories,
{\em Proceedings of the 15th International Symposium on 
Advances in Geographic Information Systems}, 
2007.

\end{thebibliography}
\end{document}